
\documentclass[conference,a4paper,10pt]{IEEEtran}
% Some/most Computer Society conferences require the compsoc mode option,
% but others may want the standard conference format.,compsoc
%
% If IEEEtran.cls has not been installed into the LaTeX system files,
% manually specify the path to it like:
% \documentclass[conference,compsoc]{../sty/IEEEtran}

% Some very useful LaTeX packages include:
% (uncomment the ones you want to load)

% *** MISC UTILITY PACKAGES ***
%
%\usepackage{ifpdf}
% Heiko Oberdiek's ifpdf.sty is very useful if you need conditional
% compilation based on whether the output is pdf or dvi.
% usage:
% \ifpdf
%   % pdf code
% \else
%   % dvi code
% \fi
% The latest version of ifpdf.sty can be obtained from:
% http://www.ctan.org/pkg/ifpdf
% Also, note that IEEEtran.cls V1.7 and later provides a builtin
% \ifCLASSINFOpdf conditional that works the same way.
% When switching from latex to pdflatex and vice-versa, the compiler may
% have to be run twice to clear warning/error messages.

% *** CITATION PACKAGES ***
%
\ifCLASSOPTIONcompsoc
  % IEEE Computer Society needs nocompress option
  % requires cite.sty v4.0 or later (November 2003)
  \usepackage[nocompress]{cite}
\else
  % normal IEEE
  \usepackage{cite}
\fi
% cite.sty was written by Donald Arseneau
% V1.6 and later of IEEEtran pre-defines the format of the cite.sty package
% \cite{} output to follow that of the IEEE. Loading the cite package will
% result in citation numbers being automatically sorted and properly
% "compressed/ranged". e.g., [1], [9], [2], [7], [5], [6] without using
% cite.sty will become [1], [2], [5]--[7], [9] using cite.sty. cite.sty's
% \cite will automatically add leading space, if needed. Use cite.sty's
% noadjust option (cite.sty V3.8 and later) if you want to turn this off
% such as if a citation ever needs to be enclosed in parenthesis.
% cite.sty is already installed on most LaTeX systems. Be sure and use
% version 5.0 (2009-03-20) and later if using hyperref.sty.
% The latest version can be obtained at:
% http://www.ctan.org/pkg/cite
% The documentation is contained in the cite.sty file itself.
%
% Note that some packages require special options to format as the Computer
% Society requires. In particular, Computer Society  papers do not use
% compressed citation ranges as is done in typical IEEE papers
% (e.g., [1]-[4]). Instead, they list every citation separately in order
% (e.g., [1], [2], [3], [4]). To get the latter we need to load the cite
% package with the nocompress option which is supported by cite.sty v4.0
% and later.

% *** GRAPHICS RELATED PACKAGES ***
%
\ifCLASSINFOpdf
  % \usepackage[pdftex]{graphicx}
  % declare the path(s) where your graphic files are
  % \graphicspath{{../pdf/}{../jpeg/}}
  % and their extensions so you won't have to specify these with
  % every instance of \includegraphics
  % \DeclareGraphicsExtensions{.pdf,.jpeg,.png}
\else
  % or other class option (dvipsone, dvipdf, if not using dvips). graphicx
  % will default to the driver specified in the system graphics.cfg if no
  % driver is specified.
  % \usepackage[dvips]{graphicx}
  % declare the path(s) where your graphic files are
  % \graphicspath{{../eps/}}
  % and their extensions so you won't have to specify these with
  % every instance of \includegraphics
  % \DeclareGraphicsExtensions{.eps}
\fi
\hyphenation{op-tical net-works semi-conduc-tor}

\usepackage[utf8]{inputenc}
\usepackage[T1]{fontenc}
\usepackage{lmodern}
\usepackage[scaled]{}

\usepackage{graphicx}  % needed for figures
\usepackage{dcolumn}   % needed for some tables
\usepackage{bm}        % for math
\usepackage{amssymb}   % for math
\usepackage{bbm}

\usepackage{amsmath}

\usepackage {amsthm}
\usepackage{enumerate}
\usepackage{chngcntr}
\usepackage{braket}

\newtheorem{theorem}{Theorem}

%=====================================================================

\newcommand{\diad}[2]{\left|#1\right>\left<#2\right|}
\newcommand{\proj}[1]{\diad{#1}{#1}}

\theoremstyle{definition}
\newtheorem{definition}{Definition}

\newtheorem{corollary}{Corollary}

\newtheorem{conjecture}{Conjecture}

\theoremstyle{remark}

% quantum states and other operators on a hilbert spaces
\newcommand\C{\mathbbm{C}}
\def\IC{{\C}}

\newcommand\D{\mathcal{D}}
\newcommand\B{\mathcal{B}}

\newcommand{\Em}{\mathcal{E}}
\newcommand{\Fm}{\mathcal{F}}

\newcommand{\Hm}{\mathcal{H}}

\newcommand{\I}{I}

\newcommand{\id}{\mathcal{I}}

\def\diag{{\rm diag}}
\def\arg{{\rm arg}}
\def\Tr{{\rm Tr}} \def\tr{{\Tr}}
\def\cW{{\cal W}}

\hyphenation{ALPGEN}
\hyphenation{EVTGEN}
\hyphenation{PYTHIA}

%=====================================================================

\usepackage{verbatim}
\usepackage{booktabs}
\usepackage{tikz}

\usetikzlibrary{arrows,positioning,fit,shapes,calc}

%%%%%%%%%%%%%%%%%%%%%%%%%%%%%%%%%%%%%%%%%%%%%%%%%%%%%%%%%%%%%%%%%%%%%%%%%%%%%%
% A square shaped arrow
% The arrow has been renamed to squarea to avoid conflict with the square arrow
% defined in the CVS version of PGF

\newdimen\arrowsize
\pgfarrowsdeclare{squarea}{squarea}
{
  \arrowsize=0.4pt
  \advance\arrowsize by.275\pgflinewidth%
  \pgfarrowsleftextend{+-\arrowsize}
  \advance\arrowsize by.5\pgflinewidth
  \pgfarrowsrightextend{+\arrowsize}
}
{
  \arrowsize=0.4pt
  \advance\arrowsize by.275\pgflinewidth%
  \pgfsetdash{}{+0pt}
  \pgfsetroundjoin
  \pgfpathmoveto{\pgfqpoint{1\arrowsize}{4\arrowsize}}
  \pgfpathlineto{\pgfqpoint{-7\arrowsize}{4\arrowsize}}
  \pgfpathlineto{\pgfqpoint{-7\arrowsize}{-4\arrowsize}}
  \pgfpathlineto{\pgfqpoint{1\arrowsize}{-4\arrowsize}}
  \pgfpathclose
  \pgfusepathqfillstroke
}
% A open square shaped arrow

\pgfarrowsdeclare{open squarea}{open squarea}%{{-.5bp}{8.5bp}}
{
  \arrowsize=0.4pt
  \advance\arrowsize by.275\pgflinewidth%
  \pgfarrowsleftextend{+-.5\pgflinewidth}
  \advance\arrowsize by7\arrowsize
  \advance\arrowsize by.5\pgflinewidth
  \pgfarrowsrightextend{+\arrowsize}
}
{
  \arrowsize=0.4pt
  \advance\arrowsize by.275\pgflinewidth%
  \pgfsetdash{}{+0pt}
  \pgfsetroundjoin
  \pgfpathmoveto{\pgfqpoint{8\arrowsize}{4\arrowsize}}
  \pgfpathlineto{\pgfqpoint{0\arrowsize}{4\arrowsize}}
  \pgfpathlineto{\pgfqpoint{0\arrowsize}{-4\arrowsize}}
  \pgfpathlineto{\pgfqpoint{8\arrowsize}{-4\arrowsize}}
  \pgfpathclose
  \pgfusepathqstroke
}

%%%%%%%%%%%%%%%%%%%%%%%%%%%%%%%%%%%%%%%%%%%%%%%%%%%%%%%%%%%%%%%%%%%%%%%%%%%%%%
% A circle and diamond shape
\makeatletter
\newdimen\tempa
\newdimen\tempb
\pgfdeclareshape{diamond in circle}{
\inheritsavedanchors[from=diamond] % this is a diamond
\inheritsavedanchors[from=circle] % this is a circle
\inheritanchorborder[from=circle]
\inheritanchor[from=circle]{center}
\inheritanchor[from=circle]{radius}
\inheritanchor[from=circle]{north}
\inheritanchor[from=circle]{south}
\inheritanchor[from=circle]{east}
\inheritanchor[from=circle]{west}
\inheritanchor[from=circle]{anchorborder}
  \saveddimen\radius{%
    %
    % Caculate ``height radius''
    %
    \pgf@ya=.5\ht\pgfnodeparttextbox%
    \advance\pgf@ya by.5\dp\pgfnodeparttextbox%
    \pgfmathsetlength\pgf@yb{\pgfkeysvalueof{/pgf/inner ysep}}%
    \advance\pgf@ya by\pgf@yb%
    %
    % Caculate ``width radius''
    %
    \pgf@xa=.5\wd\pgfnodeparttextbox%
    \pgfmathsetlength\pgf@xb{\pgfkeysvalueof{/pgf/inner xsep}}%
    \advance\pgf@xa by\pgf@xb%
    %
    % Calculate length of radius vector:
    %
    \pgf@process{\pgfpointnormalised{\pgfqpoint{\pgf@xa}{\pgf@ya}}}%
    \ifdim\pgf@x>\pgf@y%
        \c@pgf@counta=\pgf@x%
        \ifnum\c@pgf@counta=0\relax%
        \else%
          \divide\c@pgf@counta by 255\relax%
          \pgf@xa=16\pgf@xa\relax%
          \divide\pgf@xa by\c@pgf@counta%
          \pgf@xa=16\pgf@xa\relax%
        \fi%
      \else%
        \c@pgf@counta=\pgf@y%
        \ifnum\c@pgf@counta=0\relax%
        \else%
          \divide\c@pgf@counta by 255\relax%
          \pgf@ya=16\pgf@ya\relax%
          \divide\pgf@ya by\c@pgf@counta%
          \pgf@xa=16\pgf@ya\relax%
        \fi%
    \fi%
    \pgf@x=\pgf@xa%
    %
    % If necessary, adjust radius so that the size requirements are
    % met:
    %
    \pgfmathsetlength{\pgf@xb}{\pgfkeysvalueof{/pgf/minimum width}}%
    \pgfmathsetlength{\pgf@yb}{\pgfkeysvalueof{/pgf/minimum height}}%
    \ifdim\pgf@x<.5\pgf@xb%
        \pgf@x=.5\pgf@xb%
    \fi%
    \ifdim\pgf@x<.5\pgf@yb%
        \pgf@x=.5\pgf@yb%
    \fi%
    %
    % Now, add larger of outer sepearations.
    %
    \pgfmathsetlength{\pgf@xb}{\pgfkeysvalueof{/pgf/outer xsep}}%
    \pgfmathsetlength{\pgf@yb}{\pgfkeysvalueof{/pgf/outer ysep}}%
    \ifdim\pgf@xb<\pgf@yb%
      \advance\pgf@x by\pgf@yb%
    \else%
      \advance\pgf@x by\pgf@xb%
    \fi%
  }
\backgroundpath{
    \tempa=\radius
    \pgfmathsetlength{\pgf@xb}{\pgfkeysvalueof{/pgf/outer xsep}}%
    \pgfmathsetlength{\pgf@yb}{\pgfkeysvalueof{/pgf/outer ysep}}%
    \ifdim\pgf@xb<\pgf@yb%
      \advance\tempa by-\pgf@yb%
    \else%
      \advance\tempa by-\pgf@xb%
    \fi%
    \pgfpathmoveto{\centerpoint\advance\pgf@x by\radius}%
    \pgfpathlineto{\centerpoint\advance\pgf@y by\radius}%
    \pgfpathlineto{\centerpoint\advance\pgf@x by-\radius}%
    \pgfpathlineto{\centerpoint\advance\pgf@y by-\radius}%
    \pgfpathclose%
  }
\behindbackgroundpath{
    \tempa=\radius%
    \pgfmathsetlength{\pgf@xb}{\pgfkeysvalueof{/pgf/outer xsep}}%
    \pgfmathsetlength{\pgf@yb}{\pgfkeysvalueof{/pgf/outer ysep}}%
    \ifdim\pgf@xb<\pgf@yb%
      \advance\tempa by-\pgf@yb%
    \else%
      \advance\tempa by-\pgf@xb%
    \fi%
    \pgfpathcircle{\centerpoint}{\tempa}%
  }
}
\makeatother

\begin{document}
%
% paper title
% Titles are generally capitalized except for words such as a, an, and, as,
% at, but, by, for, in, nor, of, on, or, the, to and up, which are usually
% not capitalized unless they are the first or last word of the title.
% Linebreaks \\ can be used within to get better formatting as desired.
% Do not put math or special symbols in the title.
\title{Parallel distinguishability of quantum operations}

% author names and affiliations
% use a multiple column layout for up to three different
% affiliations

%\author{\IEEEauthorblockN{Runyao Duan}
%\IEEEauthorblockA{Center for Quantum Computation\\
%and Intelligent Systems\\
%University of Technology Sydney\\
%New South Wales, Australia 2007\\
%Runyao.Duan@uts.edu.au}\and
%\IEEEauthorblockN{Cheng Guo}
%\IEEEauthorblockA{Center for Quantum Computation\\
%and Intelligent Systems\\
%5University of Technology Sydney\\
%New South Wales, Australia 2007\\
%cheng323232@gmail.com}\and 
%\IEEEauthorblockN{Chi-Kwong Li}
%\IEEEauthorblockA{Department of Mathematics\\
%College of William and Mary\\
%Virginia, USA 23187-8795\\
%ckli@math.wm.edu}\and
%\IEEEauthorblockN{Yinan Li}
%\IEEEauthorblockA{Center for Quantum Computation\\
%and Intelligent Systems\\
%University of Technology Sydney\\
%New South Wales, Australia 2007\\
%Yinan.Li@student.uts.edu.au}}

% conference papers do not typically use \thanks and this command
% is locked out in conference mode. If really needed, such as for
% the acknowledgment of grants, issue a \IEEEoverridecommandlockouts
% after \documentclass

% for over three affiliations, or if they all won't fit within the width
% of the page (and note that there is less available width in this regard for
% compsoc conferences compared to traditional conferences), use this
% alternative format:
% 
\author{\IEEEauthorblockN{Runyao Duan\IEEEauthorrefmark{1}\IEEEauthorrefmark{2},
Cheng Guo\IEEEauthorrefmark{1},
Chi-Kwong Li\IEEEauthorrefmark{3}, 
and
Yinan Li\IEEEauthorrefmark{1}}
\IEEEauthorblockA{\IEEEauthorrefmark{1}Centre for Quantum Computation and Intelligent Systems, \\Faculty of Engineering and Information Technology,\\
University of Technology Sydney, New South Wales, Australia 2007}
\IEEEauthorblockA{\IEEEauthorrefmark{2}UTS-AMSS Joint Research Laboratory for Quantum Computation and Quantum Information Processing, \\Academy of Mathematics and Systems Science\\Chinese Academy of Sciences, Beijing 100190, China}
\IEEEauthorblockA{\IEEEauthorrefmark{3}Department of Mathematics, \\College of William and Mary, Virginia, 23187-8795, USA}
}

% use for special paper notices
%\IEEEspecialpapernotice{(Invited Paper)}

% make the title area
\maketitle

% As a general rule, do not put math, special symbols or citations
% in the abstract
\begin{abstract}
We find that the perfect distinguishability of two quantum operations by a parallel scheme depends only on an operator subspace generated from their Choi-Kraus operators. We further show that any operator subspace can be obtained from two quantum operations in such a way. This connection enables us to study the parallel distinguishability of operator subspaces directly without explicitly referring to the underlining quantum operations. We obtain a necessary and sufficient condition for the parallel distinguishability of an operator subspace that is either one-dimensional or Hermitian. In both cases the condition is equivalent to the non-existence of positive definite operator in the subspace, and an optimal discrimination protocol is obtained. Finally, we provide more examples to show that the non-existence of positive definite operator is sufficient for many other cases, but in general it is only a necessary condition. 
\end{abstract}

% no keywords

% For peer review papers, you can put extra information on the cover
% page as needed:
% \ifCLASSOPTIONpeerreview
% \begin{center} \bfseries EDICS Category: 3-BBND \end{center}
% \fi
%
% For peerreview papers, this IEEEtran command inserts a page break and
% creates the second title. It will be ignored for other modes.
\IEEEpeerreviewmaketitle

\section{Introduction}
The distinguishability of quantum operations (or intuitively quantum devices) has received great interest in recent years. Compared with the discrimination of quantum states, which is completely characterized by their orthogonality, the distinguishability of quantum operations is more complicated but interesting. In fact, we can choose arbitrary input states as well as arbitrary schemes when distinguishing them. It has been shown that the use of entanglement can significantly improve the discrimination efficiency \cite{Acin2001,DAriano2001,Ji2006,Sacchi2005,Piani2009}. Meanwhile, it has also been shown that by using a sequential scheme, entanglement is not always necessary when distinguishing unitary operations \cite{Duan2007}. Thus there is an interesting trade-off between the spatial resources (entanglement or circuits) and the temporal resources (running steps or discriminating time) when distinguishing quantum operations. More precisely, we consider two basic strategies, the adaptive strategy and the non-adaptive strategy \cite{Harrow2009}. Adaptive strategies allow us to reuse the outputs of previous uses of the quantum operation when preparing the input to subsequent uses; Non-adaptive strategies require that the inputs to all uses of the given operation are chosen before any of them is applied with possible auxiliary systems.

It is worth noting that a sufficient and necessary condition for the perfect distinguishability of quantum operations has been obtained when general adaptive discrimination strategy is used \cite{Duan2009}. However, in practice available resources for discrimination could be very limited and it is not always possible to use adaptive strategies. For instance,  consider the scenario that Alice and Bob are separated by a long distance and share an unknown quantum channel which needs to be identified. When an adaptive protocol is applied, Bob needs to send the intermediate outputs back to Alice for preparing the next input, which requires more resources and infeasible. Clearly non-adaptive strategy would be more suitable in this situation.

\tikzset{
every transaction/.style = {fill=white!100},
transaction/.style = {rectangle, draw=white!100, ultra thick,
    minimum size = 6mm, every actor role},
every actor role/.style = {},
actor role/.style = {rectangle, draw=black!80, ultra thick,
    minimum size = 6mm, every actor role},
composite actor role/.style = {fill=white!80, actor role},
elementary actor role/.style = {fill=white!100, actor role},
initiator/.style = {-},
executor/.style = {<-},
shenglue/.style = {\cdots},
system/.style = {rectangle, fill=white!100, ultra thick, draw=white!80,
            minimum height=60mm, minimum width=3cm,outer sep=0pt}}

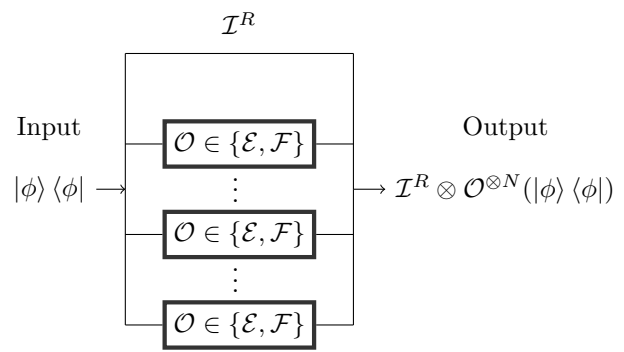
\begin{figure}\label{parallel}
\centering
\pgfdeclarelayer{background}
\pgfdeclarelayer{foreground}
\pgfsetlayers{background,main,foreground}
\begin{tikzpicture}[node distance=1cm, on grid]
    \begin{pgfonlayer}{background}
        \node [system] (system) at (0,3){};
    \end{pgfonlayer}
    \node [transaction] (input) [minimum height=10mm] at ( -2.5,3) {$\proj{\phi}$};
    \node [above] at (input.north) {Input};
    \path ( $(system.south west)!.50!(system.north west)$)
         edge [executor] (input);
     \path ( $(system.south west)!.50!(system.north west)$)
         edge [initiator] ( $(system.south west)!.80!(system.north west)$);
\path ( $(system.south west)!.50!(system.north west)$)
         edge [initiator] ( $(system.south west)!.20!(system.north west)$);
   \path ( $(system.south west)!.80!(system.north west)$)
         edge [initiator] ( $(system.south east)!.80!(system.north east)$);    
\node [transaction] (input) [minimum height=5mm] at ( 0,5.2) {$\id^R$};
\node [composite actor role] (channel1) [minimum width=1.8cm] at ( $(system.south)!.60!(system.north)$){$\mathcal{O}\in \{\Em,\Fm\}$};
\node [composite actor role] (channel2) [minimum width=1.8cm] at ( $(system.south)!.20!(system.north)$){$\mathcal{O}\in \{\Em,\Fm\}$};
\node [composite actor role] (channel3) [minimum width=1.8cm] at ( $(system.south)!.40!(system.north)$){$\mathcal{O}\in \{\Em,\Fm\}$};
\path (channel1) edge [initiator] ( $(system.south west)!.60!(system.north west)$);
\path (channel2) edge [initiator] ( $(system.south west)!.20!(system.north west)$);
\path (channel3) edge [initiator] ( $(system.south west)!.40!(system.north west)$);
%path (channel5) edge [initiator] ( $(system.south west)!.80!(system.north west)$);
\node[text width=3cm] at (1.4,3.1) {$\vdots$};
\node[text width=3cm] at (1.4,1.9) {$\vdots$};
\node [transaction] (output) [minimum height=10mm] at (3.5,3) {$\id^R\otimes\mathcal{O}^{\otimes N}(\proj{\phi})$};
\node [above] at (output.north) {Output};
\path ( $(system.south east)!.60!(system.north east)$) edge [initiator] (channel1);
\path ( $(system.south east)!.20!(system.north east)$) edge [initiator] (channel2);
\path ( $(system.south east)!.40!(system.north east)$) edge [initiator] (channel3);
%\path ( $(system.south east)!.80!(system.north east)$) edge [initiator] (channel5);
\path (output) edge [executor] ($(system.south east)!.50! (system.north east)$);
     \path ($(system.south east)!.50! (system.north east)$)
         edge [initiator] ( $(system.south east)!.80!(system.north east)$);
\path ($(system.south east)!.50! (system.north east)$)
         edge [initiator]  ($(system.south east)!.20!(system.north east)$);

\end{tikzpicture}
    \caption{A parallel scheme to distinguish an unknown quantum operation $\mathcal{O}\in\{\Em, \Fm\}$ with $N$ uses, where $\id^R$ represents the identity operator on the auxiliary system $R$.}
\end{figure}

In this paper we focus on non-adaptive strategy, or the parallel scheme, which only allows one to use the unknown operation in parallel. An auxiliary system can also be used if needed, as shown in Figure \ref{parallel}. More precisely, we consider distinguishing two quantum operations $\Em$ and $\Fm$ with respective Choi-Kraus operators $\{E_j: 1\leq j\leq n_0\}$ and $\{F_k:1\leq k\leq n_1\}$ in parallel. We will show that the parallel distinguishability of $\Em$ and $\Fm$  is equivalent to the existence of an integer $N$ such that there is a density operator $\rho\in (S_{\Em,\Fm}^{\otimes N})^\perp$, where 
\begin{equation}\label{subspaceEF}
S_{\Em,\Fm}= {\rm span} \{E_j^\dag F_k:1\leq j\leq n_0, ~1\leq k\leq n_1\}.
\end{equation}
Conversely, for any operator space $T$, we can always find two quantum operations $\Em$ and $\Fm$ such that $T=S_{\Em,\Fm}$. Then we will propose a sufficient and necessary condition for two classes of operator subspaces, namely the one-dimensional operator space and the Hermitian operator space. Specifically, in both of these cases we can always obtain the optimal number of times which we need to tensor. However, this condition is only necessary for general cases.

%\section{Notations and preliminaries}

%\medskip

\section{The characterization of parallel distinguishability}
Consider a $d$-dimensional Hilbert space $\Hm_d$. The set of all linear operators on $\Hm_d$ is denoted by $\B(\Hm_d)$. A general quantum state $\rho$ on $\Hm_d$ is a density operator in $\B(\Hm_d)$ which is positive with trace unity. Moreover, a pure state $\ket{\psi}$ is a unit vector in $\Hm_d$ and the set of all  density operators in $B(\Hm_d)$ is denoted by $\D(\Hm_d)$. Let $\rho$ have the spectral decomposition $\rho=\sum_{k=1}^d\lambda_k\proj{\psi_k}$. The support of $\rho$ is defined by ${\rm supp}(\rho)={\rm span}\{\ket{\psi_k}:\lambda_k>0\}$. Moreover, the Hilbert-Schmidt inner product for $A,B\in\B(\Hm_d)$ is given by $\tr(A^\dag B)$.

A quantum operation $\Em$ from $\B(\Hm_d)$ to $\B(\Hm_{d'})$ is a completely positive and trace-preserving (CPTP) map with the form  $\Em(\rho)=\sum_{i=1}^n E_i\rho E_i^\dag$, where $\{E_i:1\leq i\leq n\}$ are the Choi-Kraus operators of $\Em$ satisfying $\sum_{i=1}^n E_i^\dag E_i=I_d$. An isometry operation is a quantum operation with only one Choi-Kraus operator $U$ such that $U^\dag U=I_d$. Thus, $U$ is a unitary operation when $d=d'$.

\medskip
%It is worth mentioning the sufficient and necessary condition for perfect distinguishability \cite{Duan2009}:
%\begin{theorem}\label{distinguish}
%Let $\Em$ and $\Fm$ be two quantum operations from $\B(\Hm_d)$ to
%$\B(\Hm_{d'})$ with Choi-Kraus operators $\{E_{j}:j=1\dots n_0\}$ and $\{F_{k}:{k=1\dots n_1}\}$, respectively. Then $\Em$ and $\Fm$ are perfectly distinguishable by a finite number of uses iff i) $\Em$ and $\Fm$ are entanglement-assisted disjoint, and ii) $\I_d\not\in {\rm span} \{E_{j}^\dagger F_{k}:j=1,\dots,n_0, k=1,\dots,n_1\}$.
%\end{theorem}

We first consider the condition under which two quantum operations $\Em$ and $\Fm$ from $\B(\Hm_d)$ to $\B(\Hm_{d'})$ can be distinguished with a single use of the unknown operation. In other words, we want to find a normalized pure input (possibly entangled) state $\ket{\phi}^{RQ}$ such that $(\id_d^R\otimes\Em^Q)(\proj{\phi}^{RQ})$ is orthogonal to $(\id_d^R\otimes\Fm^Q)(\proj{\phi}^{RQ})$, or explicitly,
\begin{equation}\label{orthoEF}
\tr ((\id_d^R\otimes\Em^Q)(\proj{\phi}^{RQ})) ((\id_d^R\otimes\Fm^Q)(\proj{\phi}^{RQ}))=0, 
\end{equation}
where $\Hm_d^R$ denote the auxiliary system and $\Hm_d^Q$ is the principal system under consideration. Choose $\ket{\phi}^{RQ}=(I^R\otimes X^Q)\ket{\Psi}$, where $\Tr(X^{\dagger}X)=1$ and $\ket{\Psi}=\sum_{i=0}^{d-1}\ket{i^R}\ket{i^Q}$. Substituting $\Em$ and $\Fm$ with their Choi-Kraus operators $\{E_{j}: 1\leq j\leq n_0\}$ and $\{F_{k}:{1\leq k\leq n_1}\}$ into Eq. (\ref{orthoEF}), we obtain
$$\sum_{j,k}\Tr(E_{j}^{\dagger}F_{k}XX^{\dagger})\Tr(F_{k}^{\dagger}E_{j}XX^{\dagger})=0,$$
which immediately implies that $XX^\dag$ is orthogonal to $E_j^\dag F_k$ for any $j$ and $k$. Noticing $XX^\dag$ is always positive and trace unity, $XX^\dag\in\D(\Hm_d)$.

When multiple uses of the unknown quantum operation is considered, the calculation is similar and we obtain the following
\begin{theorem}\label{equivalent}
Let $\Em$ and $\Fm$ be two quantum operations from $\B(\Hm_d)$ to
$\B(\Hm_{d'})$ with Choi-Kraus operators $\{E_j:1\leq j\leq n_0\}$ and $\{F_k:1\leq k\leq n_1\}$, respectively. Then they can be perfectly distinguished by $N$ uses in parallel if and only if there is a density operator $\rho\in(S_{\Em,\Fm}^{\otimes N})^\perp$, where $S_{\Em,\Fm}$ is given by Eq. (\ref{subspaceEF}).
\end{theorem}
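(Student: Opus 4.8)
The plan is to reduce the $N$-use parallel problem to the single-use problem already analyzed immediately above, applied to the tensor powers of the operations. The key observation is that using $\Em$ (respectively $\Fm$) $N$ times in parallel, with an arbitrary joint input and auxiliary system, is nothing but a single use of the quantum operation $\Em^{\otimes N}$ (respectively $\Fm^{\otimes N}$), which is again a CPTP map, now from $\B(\Hm_d^{\otimes N})$ to $\B(\Hm_{d'}^{\otimes N})$. A set of Choi-Kraus operators for $\Em^{\otimes N}$ is $\{E_{j_1}\otimes\cdots\otimes E_{j_N}:1\le j_i\le n_0\}$, and similarly for $\Fm^{\otimes N}$; these satisfy the trace-preserving condition because $\sum_{j_1,\dots,j_N}(E_{j_1}\otimes\cdots\otimes E_{j_N})^\dag(E_{j_1}\otimes\cdots\otimes E_{j_N})=(\sum_{j}E_j^\dag E_j)^{\otimes N}=I_{d^N}$.

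First I would invoke the single-use computation preceding the theorem, now for the pair $\Em^{\otimes N},\Fm^{\otimes N}$. Writing an arbitrary normalized input as $\ket{\phi}=(I^R\otimes X^Q)\ket{\Psi}$ with $\ket{\Psi}=\sum_i\ket{i^R}\ket{i^Q}$ on $\Hm_{d^N}^R\otimes\Hm_{d^N}^Q$ and $\Tr(X^\dag X)=1$ loses no generality: by the operator–vector correspondence every pure bipartite state whose reference has dimension at least $d^N$ takes this form after restriction to its support, and enlarging the reference beyond $d^N$ cannot improve distinguishability. The identical algebra then shows that the two outputs are orthogonal exactly when $\Tr\!\big((E_{j_1}\otimes\cdots\otimes E_{j_N})^\dag(F_{k_1}\otimes\cdots\otimes F_{k_N})\,XX^\dag\big)=0$ for all index tuples, i.e.\ when the density operator $\rho=XX^\dag$ is Hilbert–Schmidt orthogonal to every generator of $S_{\Em^{\otimes N},\Fm^{\otimes N}}$. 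Hence single-use distinguishability of the tensor powers is equivalent to the existence of a density operator $\rho\in S_{\Em^{\otimes N},\Fm^{\otimes N}}^\perp$.

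The remaining and decisive step is the identity $S_{\Em^{\otimes N},\Fm^{\otimes N}}=S_{\Em,\Fm}^{\otimes N}$. This follows from
$$(E_{j_1}\otimes\cdots\otimes E_{j_N})^\dag(F_{k_1}\otimes\cdots\otimes F_{k_N})=(E_{j_1}^\dag F_{k_1})\otimes\cdots\otimes(E_{j_N}^\dag F_{k_N}),$$
together with the elementary fact that if a family of operators spans a space $V$, then the tensor products of $N$ such operators span $V^{\otimes N}$. Applying this with $V=S_{\Em,\Fm}$, spanned by $\{E_j^\dag F_k\}$, yields the identity, and substituting it into the conclusion of the previous paragraph gives the stated criterion. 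I expect the only genuinely delicate point to be the justification that the parametrization by $X$ is fully general and that a reference system of dimension $d^N$ suffices; the algebraic manipulations and the span identity are routine once that reduction is in place.
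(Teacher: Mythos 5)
Your proposal is correct and follows exactly the route the paper intends: the paper omits the proof of Theorem~\ref{equivalent} with the remark that ``the calculation is similar'' to the single-use case, and your reduction of $N$ parallel uses to a single use of $\Em^{\otimes N}$ versus $\Fm^{\otimes N}$, combined with the factorization $(E_{j_1}\otimes\cdots\otimes E_{j_N})^\dag(F_{k_1}\otimes\cdots\otimes F_{k_N})=(E_{j_1}^\dag F_{k_1})\otimes\cdots\otimes(E_{j_N}^\dag F_{k_N})$ and the span identity $S_{\Em^{\otimes N},\Fm^{\otimes N}}=S_{\Em,\Fm}^{\otimes N}$, is precisely that calculation made explicit. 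Your care about the generality of the parametrization $\ket{\phi}=(I^R\otimes X^Q)\ket{\Psi}$ and the sufficiency of a reference of dimension $d^N$ is a welcome addition that the paper glosses over.
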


This theorem shows that the parallel distinguishability is only determined by the operator subspace $S_{\Em,\Fm}$. Moreover, we can easily derive a necessary condition as follows.
\begin{corollary}\label{positive}
If $S_{\Em,\Fm}$ contains a positive definite operator, then $\Em$ and $\Fm$ cannot be perfectly distinguished by a finite number of uses.
\end{corollary}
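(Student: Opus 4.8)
The plan is to invoke the characterization in Theorem~\ref{equivalent} and argue contrapositively: I will show that the presence of a positive definite operator in $S_{\Em,\Fm}$ obstructs the existence of an orthogonal density operator at \emph{every} level $N$, so that the ``if'' direction of the theorem can never be satisfied.

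First I would suppose that $P\in S_{\Em,\Fm}$ is positive definite. Since positive definiteness is preserved under tensor products, $P^{\otimes N}$ is again positive definite and lies in $S_{\Em,\Fm}^{\otimes N}$ for every $N\geq 1$; being positive definite it is in particular Hermitian, so $(P^{\otimes N})^\dag=P^{\otimes N}$. Next, for an arbitrary density operator $\rho\in\D(\Hm_d^{\otimes N})$ with spectral decomposition $\rho=\sum_{k}\lambda_k\proj{\psi_k}$ (where at least one $\lambda_k>0$), the Hilbert--Schmidt pairing gives $\Tr((P^{\otimes N})^\dag\rho)=\Tr(P^{\otimes N}\rho)=\sum_k\lambda_k\langle\psi_k|P^{\otimes N}|\psi_k\rangle>0$, because the expectation of a positive definite operator on any unit vector is strictly positive. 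Hence $\rho$ cannot be orthogonal to $P^{\otimes N}$, and since $P^{\otimes N}\in S_{\Em,\Fm}^{\otimes N}$, the orthogonal complement $(S_{\Em,\Fm}^{\otimes N})^\perp$ contains no density operator.

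Finally, because this conclusion holds for every $N$, Theorem~\ref{equivalent} immediately yields that $\Em$ and $\Fm$ cannot be perfectly distinguished by any finite number of parallel uses, which is the assertion of the corollary.

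There is no serious obstacle here, as the statement follows quite directly from Theorem~\ref{equivalent}; the only point requiring care is the distinction between positive \emph{definiteness} and mere positive semidefiniteness. It is the strict positivity of the eigenvalues of $P^{\otimes N}$ (each a product of strictly positive eigenvalues of $P$) that forces $\Tr(P^{\otimes N}\rho)>0$ for every nonzero $\rho\geq 0$; a merely positive semidefinite $P$ could be orthogonal to some density operator supported on its kernel, which is precisely why the definiteness hypothesis is indispensable.
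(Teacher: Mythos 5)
Your proof is correct and follows exactly the route the paper intends: the corollary is stated as an immediate consequence of Theorem~\ref{equivalent}, and your argument---that a positive definite $P\in S_{\Em,\Fm}$ gives a positive definite $P^{\otimes N}\in S_{\Em,\Fm}^{\otimes N}$ whose Hilbert--Schmidt pairing with any density operator is strictly positive---is precisely the easy derivation the authors leave implicit. Your closing remark on why definiteness (rather than mere semidefiniteness) is essential is also accurate and well placed.
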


Meanwhile, we are curious about what kind of operator subspace can be generated by two quantum operations. The following theorem shows that the operator subspace can be chosen freely.

% define an equivalence relation for the parallel distinguishability of pairs of quantum operations. Indeed, 
% two pairs of quantum operations $(\Em_1,\Fm_1)$ and $(\Em_2,\Fm_2)$ from $\B(\Hm_d)$ to $\B(\Hm_{d'})$ have the same parallel distinguishability if $S_{\Em_1,\Fm_1}=S_{\Em_2,\Fm_2}$. We denote this relation by $(\Em_1,\Fm_1)\sim_{PD}(\Em_2,\Fm_2)$. Moreover, we denote $\mathbb{S}_{\Em,\Fm}=\{(\Em',\Fm'):(\Em',\Fm')\sim_{PD}(\Em,\Fm)\}$.

% With this relationship, we can simply focus on $S_{\Em,\Fm}$. If $S_{\Em,\Fm}$ satisfies Theorem \ref{equivalent}, all pairs of operations in $\mathbb{S}_{\Em,\Fm}$ can be distinguished parallelly. Furthermore, the following theorem illustrates that $S_{\Em,\Fm}$ can be chosen as any operator subspace in $\B(\Hm_d)$.
\begin{theorem}\label{free choose}
For any subspace $T\subseteq B(\Hm_d)$, there are two quantum operations $\Em$ and $\Fm$ from $\B(\Hm_d)$ to $\B(\Hm_{d'})$ such that $T=S_{\Em,\Fm}$.
\begin{proof}
We first assume $T$ is spanned by a finite set of operators $\{T_1, T_2,\dots, T_N\}$ where $N\leq d^2$ is the dimension of $T$. Moreover, we assume $T_i^{\dag}T_i\leq I_d$ for $i=1,\dots,N$. We will show that for each $T_i$, there exists two isometries $U_i$ and $V_i$ from $\Hm_d$ to $\Hm_{d'}$ where $d'\geq 2d$ such that $T_i=U_i^\dag V_i$. To see this, let $T_i$ have the singular value decomposition $\sum_{k=1}^{n_i}\sigma_i^k\ket{\psi_i^k}\bra{\phi_i^k}$, where $0\leq \sigma_{i}^{k}\leq 1$.

Define $U_i=\sum_{k=1}^{n_i}\ket{\alpha_i^k}\bra{\psi_i^k}$ and $V_i=\sum_{k=1}^{n_i}\ket{\beta_i^k}\bra{\phi_i^k}$. For each $i$, $\{\ket{\alpha_i^k}:1\leq k\leq n_i\}$ and $\{\ket{\beta_i^k}:1 \leq k\leq n_i\}$ are two sets of orthonormal vectors in $\Hm_{d'}$ to be determined. To make $U_i$ and $V_i$ satisfy $T_i=U_i^\dagger V_i$, we need: 
$$\braket{\alpha_i^j|\beta_i^k}=0~\mbox{and}~\braket{\alpha_i^k|\beta_i^k}=\sigma_i^k,~\mbox{for any}~j,k=1,\dots n_i.$$
This can be done by choosing $n_i$ two-dimensional subspaces in $\Hm_{d'}$ which are mutually orthogonal, and denote them by $K_i^j$ with $j=1,\dots,n_i$. In each $K_i^j$ we choose a basis $\{\ket{\alpha_i^j}, \ket{\beta_{i}^j}\}$ such that $K_i^j={\rm span}\{\ket{\alpha_i^j}, \ket{\beta_{i}^j}\}$ and $\braket{\alpha_i^j|\beta_i^j}=\sigma_i^j$. This can be done since $0\leq \sigma_{i}^j\leq 1$. (In the special case of $\sigma_{i}^j=1$, $K_{i}^j$ is one-dimensional). Note that we can let $d'\geq 2d\geq \max\{2n_i:i=1,2,\dots,N\}$ so such $\{K^j_i\}$ always exist.

Now we construct two quantum operations $\Em$ and $\Fm$ with Choi-Kraus operators $\{E_j:j=1,\dots,N\}$ and $\{F_k:k=1,\dots,N\}$ such that $T=S_{\Em,\Fm}$. This can be done by choosing $E_j=\frac{1}{\sqrt{N}}U_j\otimes \ket{j}$ and $F_k=\frac{1}{\sqrt{N}}V_k\otimes \ket{k}$, where $j,k=1,\cdots, N$.
\end{proof}
\end{theorem}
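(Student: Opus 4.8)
The plan is to build $\Em$ and $\Fm$ directly from a basis of $T$, exploiting the fact that $S_{\Em,\Fm}$ is generated by the \emph{products} $E_j^\dag F_k$ rather than by the Choi--Kraus operators themselves. First I would fix a basis $\{T_1,\dots,T_N\}$ of $T$; this is finite with $N\le d^2$ because $T$ is a subspace of the finite-dimensional space $\B(\Hm_d)$. Since rescaling a basis vector changes neither $T$ nor any span built from it, I would assume without loss of generality that each $T_i$ is a contraction, i.e. $T_i^\dag T_i\le I_d$, so that every singular value of $T_i$ lies in $[0,1]$. The strategy then splits into two parts: realize each single basis operator $T_i$ as an overlap $U_i^\dag V_i$ of two isometries, and then splice the $N$ isometry pairs together so that only the diagonal products $E_i^\dag F_i$ survive.

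For the single-operator step I would start from a singular value decomposition $T_i=\sum_{k}\sigma_i^k\,\diad{\psi_i^k}{\phi_i^k}$ with $0\le\sigma_i^k\le 1$ and look for isometries of the form $U_i=\sum_k\diad{\alpha_i^k}{\psi_i^k}$ and $V_i=\sum_k\diad{\beta_i^k}{\phi_i^k}$ into a larger space $\Hm_{d'}$. Expanding $U_i^\dag V_i$ shows that $T_i=U_i^\dag V_i$ holds precisely when $\braket{\alpha_i^j|\beta_i^k}=\sigma_i^k\,\delta_{jk}$, that is, each matched pair has overlap exactly $\sigma_i^k$ while all unmatched pairs are orthogonal. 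I would meet this by reserving, for each $k$, a two-dimensional subspace of $\Hm_{d'}$, making these subspaces mutually orthogonal, and choosing $\ket{\alpha_i^k},\ket{\beta_i^k}$ inside the $k$-th one with the prescribed overlap $\sigma_i^k$ (possible because two unit vectors of any inner product in $[0,1]$ fit in a plane, the degenerate case $\sigma_i^k=1$ collapsing the plane to a line). Orthogonality between distinct blocks automatically enforces the vanishing off-diagonal overlaps and makes $\{\ket{\alpha_i^k}\}_k$ and $\{\ket{\beta_i^k}\}_k$ orthonormal, so $U_i,V_i$ are genuine isometries once $d'\ge 2d$. This block construction is where I expect the real work to sit, and the point to keep in mind is that the single overlap constraint must simultaneously reproduce the magnitudes and the orthogonality structure of $T_i$.

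The splicing step uses an orthogonal tag register, which I regard as the conceptual crux. I would set $E_j=\tfrac{1}{\sqrt N}\,U_j\otimes\ket{j}$ and $F_k=\tfrac{1}{\sqrt N}\,V_k\otimes\ket{k}$ for an orthonormal family $\{\ket{j}\}_{j=1}^N$, so that $E_j^\dag F_k=\tfrac1N\,\braket{j|k}\,U_j^\dag V_k=\tfrac1N\,\delta_{jk}\,T_j$. Thus every cross term is killed by the tags and only the scaled diagonal operators $\tfrac1N T_j$ remain, giving $S_{\Em,\Fm}=\mathrm{span}\{T_1,\dots,T_N\}=T$; without the tags the span of the full product set $\{E_j^\dag F_k\}$ would generically overshoot $T$. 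Finally I would check that $\Em$ and $\Fm$ are legitimate quantum operations: since $U_j$ is an isometry, $E_j^\dag E_j=\tfrac1N U_j^\dag U_j=\tfrac1N I_d$, and summing the $N$ tags yields $\sum_j E_j^\dag E_j=I_d$, with the identical computation for $\Fm$. The only remaining bookkeeping is that the common output space must hold the dilation space together with the tag factor, which merely fixes the value of $d'$.
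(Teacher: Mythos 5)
Your proposal is correct and follows essentially the same route as the paper's own proof: the full singular value decomposition of each $T_i$, realization of $T_i=U_i^\dag V_i$ via mutually orthogonal two-dimensional blocks carrying the overlaps $\sigma_i^k$, and the tag-register construction $E_j=\tfrac{1}{\sqrt N}U_j\otimes\ket{j}$, $F_k=\tfrac{1}{\sqrt N}V_k\otimes\ket{k}$ that kills cross terms. You additionally spell out the checks $E_j^\dag F_k=\tfrac1N\delta_{jk}T_j$ and $\sum_j E_j^\dag E_j=I_d$, which the paper leaves implicit.
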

With the above theorem, we can focus on the operator subspace and we say that an operator subspace $S$ has parallel distinguishability, if there exists a finite positive integer $N$ such that there is a non-zero positive operator in the orthogonal complement of $S^{\otimes N}$. In such a case, we know that for all quantum operations $\Em$ and $\Fm$ such that $S_{\Em,\Fm}=S$, they can be perfectly distinguished by $N$ uses in parallel. 

\section{Parallel distinguishability of two kinds of operator subspaces}
The parallel distinguishability is more difficult than the perfect distinguishability introduced in \cite{Duan2009}, since only limited resources can be used. We want to find some efficient method to check if an operator subspace has parallel distinguishability. By Corollary \ref{positive}, we know that the operator space should not have a positive definite operator. We will introduce two families of operator subspaces such that the parallel distinguishability is only determined by the existence of positive definite operator. 

First, we consider ${\rm dim}(S)=1$. One simple case is $S={\rm span}\{U\}$, where $U$ is a unitary operator. In this situation $\Em$ and $\Fm$ can be chosen as different unitary operations. It is well known that they can be distinguished in parallel \cite{DAriano2001} if and only if $U\neq \I_d$. Let us 
consider $S={\rm span}\{A\}$ where $A\in\B(\Hm_d)$ is not a unitary operator and not positive definite.  We will use the theory of numerical range in our study.
For $A\in\B(\Hm_d)$, let
$$W(A)=\{\bra{\psi}A\ket{\psi}:\ket{\psi}\in\Hm_d, \braket{\psi|\psi}=1\}.$$ 
be the numerical range of $A$, which 
has been researched extensively. It is known that the numerical range of 
an operator $A$ is always convex by the celebrated Toeplitz-Hausdorff
Theorem; for example see  \cite[Chapter 1]{horn2012matrix}.
It is also known that 
the numerical range of a normal operator is just the convex hull 
of its eigenvalues, and  
$W(I\otimes A)=W(A)$. Moreover, we can define the angular numerical range:
\begin{definition}
For a linear operator $A\in\B(\Hm_d)$, the angular numerical range of $A$ is defined as follows:
$$\cW(A) = \cup_{t > 0} W(tA).$$ 
\end{definition}

By the convexity of $W(A)$,  
$\cW(A)$ can be $\IC$, a half space with a straight line passing through 0 as 
the boundary, or a pointed cone with 0 as the vertex.
We can define the field angle of $A$ according to these cases as follows.:

\begin{definition}
For a linear operator $A\in\B(\Hm_d)$, the field angle of $A$, 
denoted by $\Theta(A)$, is defined as follows:
\begin{itemize}
\item[1)] If $\cW(A) = \IC$, $\Theta(A)=2\pi$;
\item[2)] If $\cW(A)$ is a half space, then $\Theta(A) = \pi$;
\item[3)] If $\cW(A)$ is a pointed cone, then $\Theta(A)$ is the angle between 
the two boundary rays of the cone.
%$=\max_{\lambda,\mu\in\cW(A)}|\arg{\lambda}-\arg{\mu}|$.
\end{itemize}
\end{definition}

% \begin{theorem}\label{nonexistenceb}
% Let  $\Em$ and $\Fm$ from $\B(\Hm_d)$ to $\B(\Hm_{d'})$ be
% quantum operations such that 
% ${\rm dim}(S_{\Em,\Fm})=1.$ Then one of the following holds.
% \begin{itemize}
% \item
% $W(A) \subseteq e^{it}(0, \infty)$ for some real $t$ and 
% $\Em$ and $\Fm$ cannot be distinguished with finite uses by a parallel scheme.
% \item $0 \in W(A)$ and 
% $\Em$ and $\Fm$ can be distinguished 
% in single use. 
% \item $0 \notin W(A)$,
% $W(A) \not\subseteq e^{it}(0,\infty)$ for any real $t$ and
% $N=\lceil {\frac{\pi}{\Theta(A)}} \rceil$ is the smallest integer such that
% \iffalse
% there is a smallest positive integer $N=\lceil {\frac{\pi}{\Theta(A)}} \rceil$ 
% such that $W(A^{\otimes N})$ contains 0, and 
% \fi
% $\Em$ and $\Fm$ can be distinguished 
% in $N$ uses parallelly. 
% \end{itemize}
% \end{theorem}
\begin{theorem}\label{nonexistenceb}
Consider $S={\rm span}\{A\}$ where $A\in\B(\Hm_d)$. Then $S$ has parallel distinguishability if and only if for any real $t$, $e^{it}A$ is not positive definite. Moreover, there exists a non-zero positive operator in $(S^{\otimes N})^\perp$ if and only if $N\geq \lceil \frac{\pi}{\Theta(A)}\rceil$. 
\end{theorem}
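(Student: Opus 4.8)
The plan is to collapse the whole statement onto a single fact about the ordinary numerical range, and then to tame the tensor power by a congruence that turns the (non-normal) operator $A^{\otimes N}$ into a normal one. First I would invoke Theorem~\ref{equivalent}: a non-zero positive operator lies in $(S^{\otimes N})^\perp=(\mathrm{span}\{A^{\otimes N}\})^\perp$ exactly when some density operator $\rho$ satisfies $\tr\!\big((A^{\otimes N})^\dagger\rho\big)=0$. Since $\tr((A^{\otimes N})^\dagger\rho)=\overline{\tr(A^{\otimes N}\rho)}$ and, by convexity of the numerical range (Toeplitz--Hausdorff), $\{\tr(A^{\otimes N}\rho):\rho\in\D(\Hm_{d}^{\otimes N})\}=W(A^{\otimes N})$, this is equivalent to $0\in W(A^{\otimes N})$. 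Both $W(A^{\otimes N})$ and $\Theta(A)$ are unchanged up to a rotation when $A$ is replaced by $e^{i\theta}A$, and the vanishing condition is phase-independent, so I would rotate $A$ so that its field-angle sector is centred on the positive real axis, i.e.\ $W(A)\subseteq\{z:|\arg z|\le\phi\}$ with $\phi=\Theta(A)/2$. The theorem then reduces to the single claim $0\in W(A^{\otimes N})\iff N\Theta(A)\ge\pi$. The first ``iff'' drops out of this, because $e^{it}A$ is positive definite for some $t$ precisely when $W(A)$ collapses onto a ray, i.e.\ $\Theta(A)=0$, i.e.\ $\lceil\pi/\Theta(A)\rceil=\infty$, so that no finite $N$ works.

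The heart of the argument is a congruence. Concentrating on the essential case $0\notin W(A)$, the centring makes $H:=\tfrac12(A+A^\dagger)\succ0$, and I would write $A=H^{1/2}(I+i\tilde K)H^{1/2}$ with $\tilde K:=H^{-1/2}\big(\tfrac1{2i}(A-A^\dagger)\big)H^{-1/2}$ Hermitian. The sector condition forces $\|\tilde K\|=\tan\phi$, so $M:=I+i\tilde K$ is a normal operator whose spectrum sits on the line $\mathrm{Re}=1$ and attains the extreme arguments $\pm\phi$. The crucial observation is that $A^{\otimes N}=G\,M^{\otimes N}\,G$ with $G:=(H^{1/2})^{\otimes N}$ positive definite, so $\Phi\mapsto G\Phi$ is a bijection of the nonzero vectors under which $\langle\Phi|A^{\otimes N}|\Phi\rangle=\langle G\Phi|M^{\otimes N}|G\Phi\rangle$; hence $0\in W(A^{\otimes N})\iff 0\in W(M^{\otimes N})$. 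This is exactly the step that disposes of entanglement: the numerical range of the non-normal $A^{\otimes N}$ over entangled inputs is not recoverable from $W(A)$ alone (already for nilpotent $A$ one has $W(A^{\otimes 2})\not\subseteq\overline{\mathrm{conv}}\,W(A)^2$), but after the congruence the operator is normal.

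To finish I would use that $M^{\otimes N}$ is normal, so $W(M^{\otimes N})=\mathrm{conv}(\mathrm{spec}\,M^{\otimes N})$, and its eigenvalues are products of $N$ eigenvalues of $M$; their arguments are sums of $N$ numbers in $[-\phi,\phi]$, and the extreme products realise the arithmetic progression of arguments $(2j-N)\phi$, $0\le j\le N$, with common difference $2\phi=\Theta(A)$. A planar set of points at positive radii contains $0$ in its convex hull iff every cyclically consecutive gap between arguments is $\le\pi$; here the internal gaps equal $\Theta(A)\le\pi$ and the wrap-around gap is $2\pi-N\Theta(A)$, so $0\in\mathrm{conv}$ iff $2\pi-N\Theta(A)\le\pi$, that is $N\Theta(A)\ge\pi$, i.e.\ $N\ge\lceil\pi/\Theta(A)\rceil$. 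Together with the previous paragraph this establishes the ``moreover'' clause.

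The degenerate cases can be settled directly: if $\Theta(A)\ge\pi$, convexity already forces $0\in W(A)$ (two boundary directions differing by $\pi$ span a chord through the origin), so $N=1=\lceil\pi/\Theta(A)\rceil$ works and the congruence can be bypassed; when $\Theta(A)=2\pi$ the origin is interior to $W(A)$. I expect the genuine obstacle to be the second paragraph: showing that no entangled input can force $0\in W(A^{\otimes N})$ prematurely. The reduction $A=H^{1/2}MH^{1/2}$ to a normal operator is what makes this tractable, and the one point demanding care is verifying $\|\tilde K\|=\tan\phi$, which guarantees that the extreme arguments $\pm\phi$ are actually attained and is needed for the sufficiency direction.
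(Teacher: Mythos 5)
Your proposal is correct and takes essentially the same route as the paper: after the identical rotation centring the sector, the paper also handles the cases $e^{it}A$ positive definite and $0\in W(A)$ separately, writes $A=H+iG$ with $H\succ 0$, and passes via the congruence $H^{-1/2}AH^{-1/2}$ to a normal operator $\diag(1+a_1i,\dots,1+a_di)$ with $a_1=-a_d=\tan(\Theta(A)/2)$, using congruence-invariance of the angular numerical range to reduce $0\in \cW(A^{\otimes N})$ to the normal case, exactly your $A=H^{1/2}(I+i\tilde K)H^{1/2}$ step. The only cosmetic difference is that the paper further compresses to the two extreme eigenvalues $D=\diag(1+a_1i,1+a_di)$ and simply asserts the threshold $N\ge \pi/\Theta(A)$, whereas you make explicit the cyclic-gap convex-hull criterion on the eigenvalue arguments that justifies it.
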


\begin{proof}
Suppose $W(A) \subseteq e^{it}(0, \infty)$ for some real $t$, i.e. $e^{-it}A$ is 
\textbf{positive definite}. Then 
for any positive integer $N$, $W((e^{-it}A)^{\otimes N}) \subseteq (0, \infty)$.  
Thus $S$ does not have parallel distinguishability.

If $0\in W(A)$, i.e. there is $\ket{\psi}$ such that $\tr(A\proj{\psi})=0$. 
In this situation $\proj{\psi}\in S^\perp$ and we are done. 
\iffalse
Notice that $0\in W(A)$ implies 
$\Theta(A)=\pi$. Thus $\lceil {\frac{\pi}{\Theta(A)}}\rceil=1$.
\fi

If $0\not\in W(A)$ and 
$W(A) \not\subseteq e^{it}(0, \infty)$ for any real $t$, then 
there exists a cone in $\IC$ with vertex 0 containing $W(A)$. So, there are
$\mu_1 = r_1 e^{i\theta_1},   
\mu_2 = r_2 e^{i\theta_2} \in W(A)$ with $r_1, r_2 > 0$ and
$\theta_1 < \theta_2 < \theta_1 + \pi$ so that
$\theta_1 \le \arg(\mu) \le \theta_2$ for all $\mu \in W(A)$. Notice that $\Theta(A)=\theta_2-\theta_1$.
We may replace $A$ by $e^{-i\frac{\theta_1+\theta_2}{2}}A$ and assume that 
$$W(A) \subseteq \{\mu \in \IC:  -\frac{\Theta(A)}{2} \le \arg(\mu) \le \frac{\Theta(A)}{2}\}.$$
Let $A = H+iG$, where $H$ and $G$ are Hermitian. Then $H$ is positive definite.
Suppose $U \in \B(\Hm_d)$ is unitary such that
\begin{equation*}
\begin{split}
A_0  &= U^{\dag} H^{-1/2}AH^{-1/2}U \\
     &= U^{\dag}(I_d + iH^{-1/2}GH^{-1/2})U \\
     &=\diag(1+a_1i, \dots, 1+a_di)
\end{split}
\end{equation*}
with $a_1  \ge \dots \ge a_d$. Then $\cW(A)=\cW(A_0)$ and
$a_1 = \tan \frac{\Theta(A)}{2}$ and $a_d = -\tan\frac{\Theta(A)}{2}$.
Furthermore, 
$$\cW(A^{\otimes N}) = \cW(A_0^{\otimes N})=\cW(D^{\otimes N}), $$
where $D = \diag(1 + a_1i, 1+a_di)$.
Hence $0 \in \cW(A^{\otimes N})$ if and only if 
$0 \in \cW(D^{\otimes N})$. Therefore there is a non-zero positive operator in 
$\{A^{\otimes N}\}^\perp$ if and only if $N\ge \frac{\pi}{\Theta(A)}$, which is always finite. 
Choose $N=\lceil {\frac{\pi}{\Theta(A)}}\rceil$ and this will be the smallest 
$N$ such that there is a non-zero positive operator in $S^{\otimes N}$. 
Since for any positive integer $K$ smaller than $N$, 
 $0\not\in W(D^{\otimes K})$. Thus there is no non-zero positive operator in $S^{\otimes K}$.
\end{proof}
%and no smaller $N$ will work.

% \begin{theorem}\label{nonexistence}
% Let quantum operations $\Em$ and $\Fm$ from $\B(\Hm_d)$ to $\B(\Hm_{d'})$ satisfy one of the following conditions:
% \begin{enumerate}
% \item $S_{\Em,\Fm}=S_{\Em,\Fm}^\dag$, where $S_{\Em,\Fm}^\dag=\{E^\dag:E\in S_{\Em,\Fm}\}$;
% \item ${\rm dim}(S_{\Em,\Fm})=1.$
% \end{enumerate}
% Then they can be distinguished with finite uses by a parallel scheme if there is no positive definite operator in $S_{\Em,\Fm}$.

% \end{theorem}
Another family of operator subspace, the operator space spanned by a set of Hermitian operators, which has parallel distinguishability if there is no positive definite operator in this space. We have the following:
% \begin{theorem}\label{nonexistence}
% Let  $\Em$ and $\Fm$ from $\B(\Hm_d)$ to $\B(\Hm_{d'})$ be
% quantum operations such that 
% $S_{\Em,\Fm}=S_{\Em,\Fm}^\dag$, where $S_{\Em,\Fm}^\dag=\{E^\dag:E\in S_{\Em,\Fm}\}$.
% Then they can be distinguished with finite uses by a parallel scheme if there is no positive definite operator in $S_{\Em,\Fm}$.
% \end{theorem}
\begin{theorem}
For an operator subspace $S$ such that $S=S^\dag$ where $S^\dag=\{E^\dag:E\in S\}$, $S$ has parallel distinguishability if and only if there is no positive definite operator in $S$.
\end{theorem}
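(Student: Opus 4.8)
The plan is to prove the two implications separately, and in the process to establish the sharper fact that for a Hermitian subspace the integer $N$ in the definition of parallel distinguishability can always be taken to be $1$. The necessity (``only if'') is immediate from the reasoning behind Corollary \ref{positive}: if $S$ contains a positive definite operator $P$, then $P^{\otimes N}$ is a positive definite element of $S^{\otimes N}$ for every $N$. Any non-zero positive operator $\rho\in(S^{\otimes N})^\perp$ would satisfy $\tr(P^{\otimes N}\rho)=0$ by orthogonality (using $P^\dag=P$), yet $\tr(P^{\otimes N}\rho)>0$ since $P^{\otimes N}$ is positive definite and $\rho$ is non-zero positive. Hence no such $\rho$ exists for any $N$, so $S$ does not have parallel distinguishability.

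For the sufficiency I would first use the hypothesis $S=S^\dag$ to pass to a Hermitian description of $S$. Because $E\in S$ forces $E^\dag\in S$, both $(E+E^\dag)/2$ and $(E-E^\dag)/(2i)$ lie in $S$ and are Hermitian, so $S$ has a basis $\{H_1,\dots,H_m\}$ of Hermitian operators. Writing a general $X\in S$ as $\sum_i\alpha_iH_i$ with $\alpha_i\in\C$, a density operator $\rho$ (which is Hermitian) lies in $S^\perp$ if and only if $\tr(H_i\rho)=0$ for all $i$. The key object is then the joint numerical range
$$L=\{(\tr(H_1\rho),\dots,\tr(H_m\rho)):\rho\in\D(\Hm_d)\}\subseteq\R^m,$$
which is compact and, crucially, convex, since it is the image of the convex compact set $\D(\Hm_d)$ under a linear map; no appeal to the Toeplitz--Hausdorff theorem is needed. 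By the previous sentence, $S^\perp$ contains a density operator exactly when $0\in L$.

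The heart of the argument is a separating hyperplane applied to $L$. If $0\notin L$, strict separation of the point $0$ from the disjoint compact convex set $L$ produces a vector $c\in\R^m$ and $\gamma>0$ with $\sum_i c_i\tr(H_i\rho)\ge\gamma>0$ for all $\rho\in\D(\Hm_d)$. Setting $P=\sum_i c_iH_i\in S$, this says $\tr(P\rho)>0$ on all density operators, whence $\lambda_{\min}(P)>0$ and $P$ is positive definite. Conversely, a positive definite $P=\sum_i c_iH_i\in S$ gives $\langle c,\ell\rangle\ge\lambda_{\min}(P)>0$ for every $\ell\in L$, forcing $0\notin L$. Therefore $0\in L$ if and only if $S$ contains no positive definite operator. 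Combining this with the preceding paragraph, the absence of a positive definite operator already yields a density operator, hence a non-zero positive operator, in $S^\perp$, so $N=1$ suffices and $S$ has parallel distinguishability.

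I expect the main obstacle to be conceptual rather than computational: one must recognize that Hermiticity collapses the problem to the single-copy case. The correct viewpoint is that the relevant object is the joint numerical range of a Hermitian basis taken over \emph{all} density operators, whose convexity is then automatic, and that the separating hyperplane theorem translates ``no positive definite operator in $S$'' precisely into ``$0$ lies in that joint numerical range.'' Once this reduction is in place, the remaining steps---extracting the Hermitian basis, rewriting orthogonality as the linear conditions $\tr(H_i\rho)=0$, and identifying $\min_{\rho}\tr(P\rho)$ with $\lambda_{\min}(P)$---are routine.
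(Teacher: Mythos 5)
Your proof is correct, and its logical skeleton coincides with the paper's: both arguments reduce the whole question to the single-copy dichotomy ``either a real combination of a Hermitian spanning set of $S$ is positive definite, or some non-zero positive $\rho$ annihilates $S$,'' and conclude that absence of a positive definite operator yields a density operator already in $S^\perp$, so $N=1$ suffices. The difference is one of self-containment: the paper obtains the dichotomy by citing Farkas' lemma for semidefinite programming as a black box, whereas you prove exactly that alternative from scratch, by forming the joint numerical range $L=\{(\tr(H_1\rho),\dots,\tr(H_m\rho)):\rho\in\D(\Hm_d)\}$, observing its convexity is automatic (a linear image of the convex compact set $\D(\Hm_d)$, with no Toeplitz--Hausdorff needed), and strictly separating $0$ from $L$ when $0\notin L$ to manufacture the positive definite element $P=\sum_i c_iH_i\in S$. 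You also make explicit two points the paper glosses over: the extraction of a Hermitian basis from $S=S^\dag$ via $(E+E^\dag)/2,\ (E-E^\dag)/(2i)\in S$ (the paper simply assumes $S$ is spanned by Hermitian $A_i$), and the necessity direction, where $P^{\otimes N}$ positive definite forces $\tr(P^{\otimes N}\rho)>0$ for every non-zero positive $\rho$ (in the paper this is delegated to the reasoning of Corollary \ref{positive}). What your route buys is a fully elementary, citation-free argument and an explicit statement of the sharper fact, implicit in the paper's proof, that for Hermitian subspaces parallel distinguishability is equivalent to single-copy distinguishability; what the paper's route buys is brevity, since Farkas' lemma packages the separation argument in one line.
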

\begin{proof}
Assume $\{A_1,\dots,A_{N}\}$ is a set of Hermitian operators such that $S={\rm span}\{A_1,\dots,A_{N}\}$. By Farkas' lemma of semi-definite programming \cite{rockafellar2015convex}, either
\begin{itemize}
\item There is a linear combination of $A_1,\dots,A_{N}$ equal to a positive definite operator; or
\item There is a non-zero positive operator $\rho$ such that $\tr(A_i\rho)=0$ for $i=1,\dots,N$.
\end{itemize}

The first statement is equivalent to the existence of a positive definite operator in $S$ and the second one is equivalent to $\rho\in S^\perp$. Thus if there is no \textbf{positive definite} operator in $S$, we can always find a non-zero positive operator in $S^\perp$.  
\end{proof}

\section{Non-existence of a positive definite operator is not always sufficient}
For arbitrary operator subspace, we are curious about if the non-existence of positive definite operator is sufficient for checking the parallel distinguishability. Unfortunately, there exists operator subspace $S$ such that there is no positive definite operator in $S$ but $S$ does not have parallel distinguishability. 
\begin{theorem}\label{counterexample}
Let $S={\rm span}\{A_1,A_2\}\in\B(\Hm_3)$ with $A_1=\proj{0}+i\proj{1}$ and $A_2=\proj{1}+i\proj{2}$, where $\{\ket{0},\ket{1},\ket{2}\}$ is an orthonormal basis of $\Hm_3$. Then for any positive integer $N$, there is no non-zero positive operator in the orthogonal complement of $S^{\otimes N}$.
\end{theorem}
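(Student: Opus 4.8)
The plan is to exploit the fact that $A_1$ and $A_2$ are both \emph{diagonal} in the basis $\{\ket{0},\ket{1},\ket{2}\}$, so that the whole problem collapses to a statement about diagonal entries. Writing $s=(s_1,\dots,s_N)\in\{0,1,2\}^N$ and $\ket{s}=\ket{s_1}\otimes\cdots\otimes\ket{s_N}$, every operator in $S^{\otimes N}$ is diagonal in the product basis $\{\ket{s}\}$, and the diagonal of $A_{i_1}\otimes\cdots\otimes A_{i_N}$ is the tensor product $d_{i_1}\otimes\cdots\otimes d_{i_N}$ of the diagonals $d_1=(1,i,0)$ and $d_2=(0,1,i)$ of $A_1,A_2$. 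Hence the diagonals of $S^{\otimes N}$ span $V^{\otimes N}$, where $V={\rm span}\{(1,i,0),(0,1,i)\}\subseteq\C^3$. A positive operator $\rho$ lies in $(S^{\otimes N})^\perp$ if and only if $\sum_s \overline{w_s}\,p_s=0$ for every $w\in V^{\otimes N}$, where $p_s=\bra{s}\rho\ket{s}\ge 0$; and since a nonzero positive $\rho$ has a nonzero nonnegative diagonal, it suffices to prove that the only entrywise nonnegative real vector $p$ orthogonal to all of $V^{\otimes N}$ is $p=0$.

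The key step is to produce a single witness vector $w^\star\in V^{\otimes N}$ all of whose entries have \emph{strictly positive real part}. Given such a $w^\star$, taking the real part of $\sum_s\overline{w^\star_s}\,p_s=0$ yields $\sum_s {\rm Re}(w^\star_s)\,p_s=0$, a sum of nonnegative terms, which forces $p_s=0$ for all $s$ and hence $\rho=0$. This is the ``positive'' alternative of the Farkas-type dichotomy already used for Hermitian subspaces; here it is convenient to exhibit the witness explicitly rather than invoke duality. Thus the entire theorem reduces to constructing $w^\star$ for each $N$.

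For the construction I would first record the convenient description $V=\{v\in\C^3: v_2=v_0+iv_1\}$, obtained directly from the spanning vectors. I would then look for $w^\star$ in product form $w^\star=g^{\otimes N}$ with $g\in V$, so that $w^\star_s=g_0^{\,n_0}g_1^{\,n_1}g_2^{\,n_2}$, where $n_a$ is the number of coordinates of $s$ equal to $a$. Choosing $g=\bigl(1,\ \tan\tfrac{\pi}{4N},\ 1+i\tan\tfrac{\pi}{4N}\bigr)$, which lies in $V$, makes the first two entries positive reals and the third the only non-real factor, with argument $\tfrac{\pi}{4N}$. Consequently $w^\star_s$ has nonzero modulus and argument $n_2\cdot\tfrac{\pi}{4N}\le N\cdot\tfrac{\pi}{4N}=\tfrac{\pi}{4}<\tfrac{\pi}{2}$, so ${\rm Re}(w^\star_s)>0$ for every $s$, as required.

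The main obstacle is precisely that no fixed $g$ works simultaneously for all $N$: any $g\in V$ with ${\rm Re}(g_2)\neq 0$ has $g_2$ with a nonzero argument, and at the extreme position $s=(2,\dots,2)$ the entry $g_2^{\,N}$ eventually rotates out of the open right half-plane as $N$ grows. The resolution, reflected in the choice above, is to let the witness depend on $N$ and shrink the offending argument at rate $1/N$, so that even the worst case $n_2=N$ stays safely inside the right half-plane. The structural identity $v_2=v_0+iv_1$ explains why the phase is unavoidable and must be carried by the third coordinate, which is what pins down the form of $g$.
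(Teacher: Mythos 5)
Your proof is correct, and it takes a genuinely different route from the paper's. The paper argues by induction on $N$: writing a candidate $\rho\in(S^{\otimes k+1})^\perp$ as $\proj{0}\otimes\rho_0+\proj{1}\otimes\rho_1+\proj{2}\otimes\rho_2$ (only these blocks matter since $A_1,A_2$ are diagonal), it extracts the relations $\Tr(\rho_0 M)+i\Tr(\rho_1 M)=0$ and $-i\Tr(\rho_1 M)+\Tr(\rho_2 M)=0$ for all $M\in S^{\otimes k}$, sums them to get $\Tr((\rho_0+\rho_2)M)=0$, and invokes the inductive hypothesis twice to force $\rho_0=\rho_2=0$ and then $\rho_1=0$. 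You avoid induction entirely: after the reduction to diagonals --- and your description $V=\{v\in\C^3: v_2=v_0+iv_1\}$ is correct, as is the identification of the diagonals of $S^{\otimes N}$ with $V^{\otimes N}$ --- you exhibit for each $N$ a single witness $g^{\otimes N}$ with $g=(1,\tan\frac{\pi}{4N},1+i\tan\frac{\pi}{4N})\in V$, whose entries $g_0^{n_0}g_1^{n_1}g_2^{n_2}$ have positive modulus and argument $n_2\frac{\pi}{4N}\le\frac{\pi}{4}<\frac{\pi}{2}$, hence strictly positive real part; taking real parts of the orthogonality relation (conjugation in the Hilbert--Schmidt pairing is harmless here) kills any nonnegative diagonal entrywise, and a positive semidefinite operator with vanishing diagonal is zero. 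What your route buys: it is constructive and strictly stronger --- it produces an explicit $G^{\otimes N}\in S^{\otimes N}$, with $G=\diag(g)\in S$, whose Hermitian part is positive definite, thereby realizing the ``positive'' branch of the Farkas alternative at every tensor power rather than merely refuting the other branch; moreover the $1/N$ dilution of the unavoidable phase in $v_2=v_0+iv_1$ connects the counterexample to the field-angle machinery of Theorem~\ref{nonexistenceb} and explains structurally why $\alpha=\frac{\pi}{2}$ is the limiting case of Conjecture~\ref{conjecture}. The paper's induction is shorter and requires no construction, but yields only the bare nonexistence statement. One cosmetic slip in your final motivational paragraph: the claim that any $g\in V$ with ${\rm Re}(g_2)\neq 0$ has $\arg g_2 \neq 0$ is false as literally stated (e.g.\ $(1,0,1)\in V$ is real), though the intended point --- no $g\in V$ can have all three entries positive real, precisely because $g_2=g_0+ig_1$ --- is right, and nothing in the actual proof rests on that remark.
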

\begin{proof}
It is easy to verify that there is no positive definite operator in $S$, we will show that for arbitrary integer $N$, there is no density operator in $(S^{\otimes N})^\perp$ by mathematical induction.
When $N=1$, by simple calculation there is no density operator in  $S^\perp$.
Assume for $N=k$, there is no density operator in $(S^{\otimes k})^\perp$. Now consider $N=k+1$, where $S^{\otimes k+1}={\rm span}\{A_i\otimes M:i=1,2, M\in S^{\otimes k} \}$. If there exists a
density operator $\rho\in(S^{\otimes k+1})^{\perp}$, we have
\begin{equation*} 
\label{A_1,A_2}
\tr(\rho(A_1\otimes M))=0,\quad \Tr(\rho(A_2\otimes M))=0,
\end{equation*}
for all $M\in S^{\otimes k}$. 
Since $A_1$ and $A_2$ are diagonal, we may assume $\rho=\proj{0}\otimes \rho_0+\proj{1}\otimes \rho_1+\proj{2}\otimes \rho_2$ where $\rho_0$, 
$\rho_1$ and $\rho_2$ are positive operators and at least one of them is non-zero. By substitution we have:
$$
\Tr(\rho_0 M)+i\Tr(\rho_1 M)=0, \quad
-i\Tr(\rho_1M)+\Tr(\rho_2 M)=0.
$$ 
let $\sigma=(\rho_0+\rho_2)/\Tr (\rho_0+\rho_2)$ if $\rho_0+\rho_2\neq 0$, then $\Tr(\sigma M)=0$. Thus $\sigma$ is a density operator in $(S^{\otimes k})^{\perp}$, which is a contradiction.
Thus $\rho_0 = \rho_2 = 0$. But then, we can conclude that $\rho_1$ is non-zero and
$\Tr (\rho_1 M) = 0$ for all $M \in S^{\otimes k}$, again a contradiction.  
\end{proof}

In general, the parallel distinguishability of an operator subspace can be checked by the following system of equations:
\begin{eqnarray}\label{constraint}
    \tr(\rho A_{j_1}\otimes\cdots\otimes A_{j_N})=0,\quad \tr(\rho)=1, \quad \rho\geq 0, \nonumber
\end{eqnarray}
where $A_{j_k}\in S$ for $k=1,\dots,N$ for arbitrary $N$. Notice that this is a semi-definite programming (SDP) problem. However, the size of the SDP will be exponential large and hard to solve. In fact, it is still not easy to check even if we can reduce the problem to a linear programming problem with polynomial size for some special class of operator spaces as shown in the following example. Let 
$$S_\alpha={\rm span}\{A_1=\proj{0}+e^{i\alpha}\proj{1},A_2=\proj{1}+e^{i\alpha}\proj{2}\},$$
where $\alpha\in[\pi/2,\pi]$ guarantees there is no positive definite operator in $S$ and  $\{\ket{0},\ket{1},\ket{2}\}$ is an orthonormal basis of $\Hm_3$.
Notice that $S_\alpha$ is just an operator space spanned by two $3\times 3$ diagonal operators, One can show that there is a non-zero positive operator in the orthogonal 
complement of $S_\alpha^{\otimes N}$ if and 
only if there is a nonzero diagonal
positive operator in the orthogonal 
complement of $S_\alpha^{\otimes N}$.
Thus, we can focus on the search 
of a diagonal positive operator and this can reduce the SDP to a linear programming problem. By simple calculation we can obtain that there is a non-zero diagonal positive operator in $S_\alpha^\perp$ if and only if $\alpha=\pi$. Now consider $S_\alpha^{\otimes 2}$, we need to solve the following linear equation:

$$(A_\alpha\otimes A_\alpha )x = 0,$$

where $A_\alpha = \begin{pmatrix} 1 & e^{i\alpha} & 0\\ 0 & 1 & e^{i\alpha}\end{pmatrix}$.
The above system of linear equations has a nonzero nonnegative solution if and only if $\alpha\in[\frac{3\pi}{4},\pi]$, and we can choose:
$$x=(1, -\cos\alpha, \cos 2\alpha, -\cos\alpha, 
 1, -\cos\alpha,\cos2\alpha, -\cos\alpha, 1).$$

For $N>2$, the number of equations is $2^N$ and the number of variables is $3^N$, which make the problem more challenging. We are going to simplify the problem by using the symmetry. The basic idea to decrease the number of variables is to classify them into different types and assume the variables in each type are same. More precisely, let us consider a fixed $N$ and represent the index of variables $x_0,\dots,x_{3^N-1}$ by ternary numbers ${\underbrace{0\dots0}_{N}},\dots,{\underbrace{2\dots2}_{N}}$. We denote the ternary expansion of $k$ by $\overline{k}$ and the numbers of $0$, $1$ and $2$ in $\overline{k}$ as $\overline{k}_0$, $\overline{k}_1$ and $\overline{k}_2$. In fact, we assume all the index with same number of $0$, $1$ and $2$ are equal, i.e. $x_k=x_{k'}$ if $\overline{k}_i=\overline{k'}_i$ for $i=0,1,2$. This can be done since $S^{\otimes N}$ is invariant under any permutation with order $N$ on its subsystems. Thus if it has a solution it must have a symmetric solution which is invariant under permutations of order $N$. Moreover, we assume:
\begin{equation}\label{partition}
x_k=\frac{\overline{k}_0!\overline{k}_1!\overline{k}_2!}{N!}p_{\overline{k}_0,\overline{k}_1,\overline{k}_2}.
\end{equation}
By this substitution we can reduce the number of variables to $O(N^2)$. Moreover, we can also use this symmetry to reduce the number of equations. Let $\rho={\rm diag}(x_0,\dots,x_{3^N})$ be the desired diagonal positive operator. Considering the first equation: $\Tr (A_1^{\otimes N}\rho)=0$. it is easy to see the index of variables will be all the binary numbers from $0$ to $2^N-1$. Denote the binary expansion of $k$ by $\hat{k}$ and the number of $1$ in $\hat{k}$ by $\hat{k}_1$ similarly. We can rewrite the equation by:  
\begin{equation}\label{eq2}
\sum_{j=0}^{2^N-1} e^{i\hat{j}_1\alpha}x_j=0
\end{equation}
Moreover, let $\mathcal{J}_r=\{j:\hat{j}_1=r,0\leq j\leq 2^N-1\}$, then $|\mathcal{J}_r|= \binom{N}{r}$ and $\binom{N}{r}x_j=p_{N-r,r,0}$ if $j\in \mathcal{J}_r$. Thus we can rewrite the above equation as:
$$\sum_{r=0}^{N} e^{ir\alpha} p_{N-r,r,0}=0.$$

Moreover, by the symmetry we can ignore the place of $A_2$ when we substitute it with $A_1$. We only need to consider equations $\tr(A_2^{\otimes l}\otimes A_1^{\otimes N-l}\rho)=0$ for $l=0,\dots,N$, which reduce the number of equations to $O(N)$. Furthermore, the substitution of $A_1$ by $A_2$ will permute the index of variables in Equation \ref{eq2} but remain the coefficient unchanged. More precisely, for an index $k$ with ternary expansion $\overline{k}=y_{N-1}3^{N-1}+\cdots+y_13+y_0$, if we substitute the first $l$ $A_1$ by $A_2$, the index after substitution will be $\overline{k}_l=(y_{N-1}+1)3^{N-1}+\cdots+(y_{N-l}+1)3^{N-l}+y_{N-l-1}3^{N-l-1}+\cdots+y_0$. Noticing that we use binary number $\hat{k}=z_{N-1}2^{N-1}+\cdots+z_12+z_0$ in Equation \ref{eq2}, we define the following map $\overline{f}$ to transfer a binary number $\hat{k}$ to a ternary number, i.e. $\overline{f}(\hat{k})=z_{N-1}3^{N-1}+\cdots+z_13+z_0$.  Thus $\tr(A_2^{\otimes l}\otimes A_1^{\otimes N-l}\rho)=0$ can be rewritten as:
\begin{equation}\label{eq3}
\sum_{j=0}^{2^N-1} e^{i\hat{j}_1\alpha}x_{\overline{f}(\hat{j})_l}=0
\end{equation}

Since we are going to use Equation \ref{partition} to represent $x_{\overline{f}(\hat{j})_l}$. We consider the elements in $\mathcal{J}_r$. Denote the number of $0$ in the first $l$ positions of $j\in \mathcal{J}_r$ by $s$, where $0\leq s\leq N-r$. After the substitution we obtain $\binom{l}{s}\binom{N-l}{N-r-s}$ ternary numbers with value of $$\frac{(N-r-s)!(2s+r-l)!(l-s)!}{N!}p_{N-r-s,2s+r-l,l-s}.$$ Thus we rewrite Equation \ref{eq3} as:
$$\sum_{r=0}^{N}\!e^{ir\alpha}\big[\!\sum_{s=0}^{N-r}\binom{2s+r-l}{s}p_{N-r-s,2s+r-l,l-s}\big]\!=\!0.$$
Using the above simplified equations, we can obtain the explicit solution for $N=3,4$.

For $N=3$, we have:
\begin{itemize}
\item $p_{0,3,0}=0$;
\item $p_{3,0,0}=p_{0,0,3}=p_{1,2,0}=p_{0,2,1}=\sin\alpha$;
\item $p_{2,1,0}=p_{0,1,2}=-\sin2\alpha$;
\item $p_{1,1,1}=-2\sin2\alpha$;
\item $p_{1,0,2}=p_{2,0,1}=\sin3\alpha$.
\end{itemize}
When $\alpha\in[\frac{2\pi}{3},\frac{3\pi}{4}]$, the above values are all non-negative. When $\alpha\in[\frac{3\pi}{4},\pi]$, we can simply use the solution for $N=2$ to construct the solution. 

For $N=4$, we have the following  solution:
\begin{itemize}
\item $p_{4,0,0}=p_{0,0,4}=p_{0,4,0}=1$;
\item $p_{0,2,2}=p_{2,2,0}=2$;
\item $p_{0,1,3}=p_{3,1,0}=p_{0,3,1}=p_{1,3,0}=-2\cos\alpha$;
\item $p_{1,1,2}=p_{2,1,1}=2\cos3\alpha$;
\item $p_{2,0,2}=-2\cos4\alpha$;
\item All the rest entries are $0$.
\end{itemize}
When $\alpha\in[\frac{5\pi}{8},\frac{2\pi}{3}]$, the above values are all non-negative.

Up to $N = 18$, we are able to determine
a solution for $\rho$ numerically for 
different choice of $\alpha$. We have 
the following conjecture.
\begin{conjecture}\label{conjecture}
For operator space $S_\alpha=\{\proj{0}+e^{i\alpha}\proj{1}, \proj{1}+e^{i\alpha}\proj{2}\}$ with $\alpha\in[\frac{\pi}{2},\pi]$, there is a density operator in the orthogonal complement of $(S_\alpha)^{\otimes N}$ if and only if $\alpha\in[\frac{\pi}{2}+\frac{\pi}{2N}, \pi]$.
\end{conjecture}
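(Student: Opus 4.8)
The plan is to translate the existence of a density operator in $(S_\alpha^{\otimes N})^\perp$ into a sharp separation statement and then locate the threshold by an explicit construction together with a matching dual certificate. By Theorem \ref{equivalent} and the reductions already carried out in the text, such a density operator exists if and only if the symmetrized linear system in the variables $p_{a,b,c}$ has a nonzero nonnegative solution. I would recast this through a theorem of alternatives (Gordan/Stiemke type): writing $V_\alpha=\mathrm{span}\{(1,e^{i\alpha},0),(0,1,e^{i\alpha})\}\subseteq\C^3$, the two spanning vectors are exactly the diagonals of $A_1,A_2$, so the orthogonality constraints say that the nonnegative real vector $x=\mathrm{diag}(\rho)$ annihilates $V_\alpha^{\otimes N}$ under the bilinear pairing. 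Since $x$ is real, this is orthogonality to the real subspace $W_{\R}=\{\mathrm{Re}(z):z\in V_\alpha^{\otimes N}\}$, and the separation lemma gives: a nonzero nonnegative $x$ exists if and only if $W_{\R}$ contains no strictly positive vector, i.e. there is no $z\in V_\alpha^{\otimes N}$ all of whose $3^N$ coordinates have strictly positive real part. It is convenient to encode $V_\alpha=(1+e^{i\alpha}t)\cdot\mathrm{span}\{1,t\}$ inside $\C[t]/(t^3)$, so that $V_\alpha^{\otimes N}=\{\Phi\,g\}$ with $\Phi=\prod_{j=1}^N(1+e^{i\alpha}t_j)$ and $g$ ranging over polynomials of degree $\le1$ in each $t_j$; the coordinates of $z$ are then the coefficients of $\Phi g$.

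Next I would record the monotonicity in $N$. If $\rho\in(S_\alpha^{\otimes N})^\perp$ is a density operator and $\sigma$ is any state, then $\sigma\otimes\rho\in(S_\alpha^{\otimes(N+1)})^\perp$, since every generator of $S_\alpha^{\otimes(N+1)}$ has the form $A\otimes M$ with $A\in S_\alpha$, $M\in S_\alpha^{\otimes N}$, and $\Tr((\sigma\otimes\rho)(A\otimes M))=\Tr(\sigma A)\Tr(\rho M)=0$. Thus the admissible set of $\alpha$ at level $N$ grows with $N$, consistent with $\alpha_N:=\tfrac{\pi}{2}+\tfrac{\pi}{2N}$ being decreasing. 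Arguing by induction on $N$ and assuming the solution set at level $N-1$ is $[\alpha_{N-1},\pi]$, this monotonicity reduces the \emph{sufficiency} direction to producing a solution only on the boundary window $\alpha\in[\alpha_N,\alpha_{N-1}]$, as $[\alpha_{N-1},\pi]$ is already covered by padding the level-$(N-1)$ solutions. On that window I would propose a closed-form ansatz for $p_{a,b,c}$ extending the $N\le4$ data: the entries there are proportional to $\sin k\alpha$ and $\cos k\alpha$, symmetric under $a\leftrightarrow c$, with multiplicities dictated by binomial factors, which strongly suggests a single trigonometric/Chebyshev generating formula. The verification is to substitute this ansatz into the $N+1$ reduced equations $\sum_r e^{ir\alpha}[\cdots]=0$, check that they hold as identities in $\alpha$, and then show every entry is $\ge0$ precisely on $[\alpha_N,\alpha_{N-1}]$, with a first entry vanishing at $\alpha=\alpha_N$ to pin the lower endpoint.

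For the \emph{necessity} direction ($\alpha<\alpha_N$) I would dualize: by the alternative above it suffices to exhibit $z=\Phi g\in V_\alpha^{\otimes N}$ whose every coefficient has strictly positive real part. I would construct $g$ explicitly (again by a trigonometric ansatz) so that all real parts stay positive for $\alpha<\alpha_N$ and one of them degenerates to $0$ exactly at $\alpha=\alpha_N$, which makes the threshold sharp. Combined with the monotonicity of the previous paragraph (no solution at level $N$ already forces no solution at every lower level), this certificate is all that is required to close the necessity half.

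The hard part will be the two explicit families and, above all, the uniform sign control: proving that the constructed entries $p_{a,b,c}$ (respectively the certificate's real parts) are nonnegative (respectively positive) on exactly the stated $\alpha$-range, simultaneously for all $N$. This is where the quantity $\pi/(2N)$ must emerge, presumably from the spacing of the zeros of the relevant $\cos k\alpha$ and $\sin k\alpha$ terms, and finding a clean inductive or generating-function argument that controls these signs for every $N$ is the crux that currently keeps the statement at the level of a conjecture.
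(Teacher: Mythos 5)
You should first be aware that the paper does not prove this statement: it is stated as Conjecture~\ref{conjecture}, supported only by the explicit solutions for $N=2,3,4$ (nonnegative exactly on the windows $[\frac{3\pi}{4},\pi]$, $[\frac{2\pi}{3},\frac{3\pi}{4}]$, $[\frac{5\pi}{8},\frac{2\pi}{3}]$), numerical solutions up to $N=18$, and the endpoint $\alpha=\frac{\pi}{2}$, which is settled in the negative for all $N$ by Theorem~\ref{counterexample}. So there is no paper proof to match your proposal against. That said, your scaffolding is sound and in places goes beyond what the paper records: the reduction to diagonal nonnegative $x$ is the paper's; your padding monotonicity ($\rho\in(S_\alpha^{\otimes N})^\perp$ implies $\sigma\otimes\rho\in(S_\alpha^{\otimes(N+1)})^\perp$) is correct and legitimately reduces sufficiency to the boundary window $[\alpha_N,\alpha_{N-1}]$; and the Gordan-type alternative is correctly applied --- since the complex span is closed under multiplication by $i$, orthogonality of a real vector to it is orthogonality to $\{\mathrm{Re}(z)\}$, and a nonzero nonnegative annihilator exists iff that real subspace contains no strictly positive vector. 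The encoding $V_\alpha^{\otimes N}=\{\Phi g\}$ with $\Phi=\prod_j(1+e^{i\alpha}t_j)$ is also a clean reformulation the paper does not state, and it makes the necessity direction in principle checkable by a single certificate per $\alpha$.

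The genuine gap is exactly where you flag it, and it is the entire mathematical content of the conjecture: neither the primal family $p_{a,b,c}$ on $[\alpha_N,\alpha_{N-1}]$ nor the dual certificate $z=\Phi g$ with all coefficients in the open right half-plane for $\alpha<\alpha_N$ is constructed, and nothing in the $N\le 4$ data pins down a formula valid for all $N$ (the $N=3$ entries are sines, the $N=4$ entries cosines, so the ``single trigonometric generating formula'' is not evident). One concrete warning on the dual side: the natural product ansatz $g=\prod_j(1+re^{i\theta}t_j)$ makes every coefficient of $z$ a product $\prod_j c_{a_j}$ with $c_0=1$, $c_1=e^{i\alpha}+re^{i\theta}$, $c_2=re^{i(\alpha+\theta)}$, and the requirement that all $3^N$ products have positive real part forces $|\arg c_1|,|\arg c_2|=O(\pi/N)$; but for $\alpha>\frac{\pi}{2}$ one cannot simultaneously make $\arg c_1$ small and $\mathrm{Re}\,c_1>0$ by tuning $r,\theta$ alone, so the certificate cannot be a pure product and must couple the variables --- meaning even the ``easy'' half of your plan requires an idea not present in the proposal. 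Your framework is a reasonable research program (and the dual reformulation is a real contribution relative to the paper's SDP/LP discussion), but as a proof it is open at both halves, which is consistent with the authors leaving the statement as a conjecture.
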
 

It is interesting to note that the conjecture is true when $N\to \infty$. Since $S_{\frac{\pi}{2}}$ is the case in Theorem \ref{counterexample}, we have already shown that the desired $N$ does not exist.

\section{Conclusion}
In this paper we discuss the problem of parallel distinguishability of general quantum operations. We show that the parallel distinguishability is determined by an operator subspace generated by their Choi-Kraus operators. Meanwhile, the operator subspace can be chosen arbitrarily. We introduce the parallel distinguishability of an operator subspace and focus on characterizing operator subspaces which have this property. Furthermore, we show the parallel distinguishability of one-dimensional operator spaces and Hermitian operator spaces can be verified by checking if there exists a positive definite operator. However, an example is given to show that this condition is not always sufficient. We also construct a family of operator subspaces and obtain some analytical and numerical results as well as a conjecture about the full characterization of the parallel distinguishability of this family.

RD is supported in part by the Australian Research Council (ARC) under Grant DP120103776 and by the National Natural Science Foundation (NNSF) of China under Grant 61179030; furthermore in part by an ARC Future Fellowship under Grant FT120100449.  CL is supported in part by the USA NSF grant DMS 1331021, the Simons Foundation Grant 351047, and the NNSF of China Grant 11571220. The research began when
he visited the Centre for Quantum Computation and Intelligent Systems
of the University of Technology Sydney
in June 2015. He would like to express
his thanks to the hospitality of the 
colleagues of University of Technology
Sydney.

% trigger a \newpage just before the given reference
% number - used to balance the columns on the last page
% adjust value as needed - may need to be readjusted if
% the document is modified later
%\IEEEtriggeratref{8}
% The "triggered" command can be changed if desired:
%\IEEEtriggercmd{\enlargethispage{-5in}}

% references section

% can use a bibliography generated by BibTeX as a .bbl file
% BibTeX documentation can be easily obtained at:
% http://mirror.ctan.org/biblio/bibtex/contrib/doc/
% The IEEEtran BibTeX style support page is at:
% http://www.michaelshell.org/tex/ieeetran/bibtex/
%\bibliographystyle{IEEEtran}
% argument is your BibTeX string definitions and bibliography database(s)
%\bibliography{IEEEabrv,../bib/paper}
%
% <OR> manually copy in the resultant .bbl file
% set second argument of \begin to the number of references
% (used to reserve space for the reference number labels box)

\bibliographystyle{IEEEtranS}
\bibliography{Mycollection}

% Generated by IEEEtranS.bst, version: 1.12 (2007/01/11)
\begin{thebibliography}{10}
\providecommand{\url}[1]{#1}
\csname url@samestyle\endcsname
\providecommand{\newblock}{\relax}
\providecommand{\bibinfo}[2]{#2}
\providecommand{\BIBentrySTDinterwordspacing}{\spaceskip=0pt\relax}
\providecommand{\BIBentryALTinterwordstretchfactor}{4}
\providecommand{\BIBentryALTinterwordspacing}{\spaceskip=\fontdimen2\font plus
\BIBentryALTinterwordstretchfactor\fontdimen3\font minus
  \fontdimen4\font\relax}
\providecommand{\BIBforeignlanguage}[2]{{%
\expandafter\ifx\csname l@#1\endcsname\relax
\typeout{** WARNING: IEEEtranS.bst: No hyphenation pattern has been}%
\typeout{** loaded for the language `#1'. Using the pattern for}%
\typeout{** the default language instead.}%
\else
\language=\csname l@#1\endcsname
\fi
#2}}
\providecommand{\BIBdecl}{\relax}
\BIBdecl

\bibitem{Acin2001}
A.~Ac{\'{\i}}n, ``{Statistical distinguishability between unitary
  operations.}'' \emph{Physical review letters}, vol.~87, no.~17, p. 177901,
  2001.

\bibitem{DAriano2001}
G.~M. D'Ariano, P.~{Lo Presti}, and M.~G.~A. Paris, ``{Using Entanglement
  Improves the Precision of Quantum Measurements},'' \emph{Physical review
  letters}, vol.~87, no.~27, p. 270404, 2001.

\bibitem{Duan2007}
R.~Duan, Y.~Feng, and M.~Ying, ``{Entanglement is Not Necessary for Perfect
  Discrimination between Unitary Operations},'' \emph{Physical review letters},
  vol.~98, no.~10, p. 100503, 2007.

\bibitem{Duan2009}
R.~Duan, Y.~Feng, and M.~Ying, ``{Perfect Distinguishability of Quantum
  Operations},'' \emph{Physical Review Letters}, vol. 103, no.~21, p. 210501,
  2009.

\bibitem{Harrow2009}
A.~W. Harrow, A.~Hassidim, D.~W. Leung, and J.~Watrous, ``{Adaptive versus
  non-adaptive strategies for quantum channel discrimination},'' \emph{Physical
  Review A}, vol.~81, no.~3, p.~11, 2009.

\bibitem{horn2012matrix}
R.~A. Horn and C.~R. Johnson, \emph{Matrix analysis}.\hskip 1em plus 0.5em
  minus 0.4em\relax Cambridge university press, 2012.

\bibitem{Ji2006}
Z.~Ji, Y.~Feng, R.~Duan, and M.~Ying, ``{Identification and Distance Measures
  of Measurement Apparatus},'' \emph{Physical Review Letters}, vol.~96, no.~20,
  p. 200401, 2006.

\bibitem{Piani2009}
M.~Piani and J.~Watrous, ``{All entangled states are useful for channel
  discrimination},'' \emph{Physical Review Letters}, vol. 102, no. June, pp.
  1--4, 2009.

\bibitem{rockafellar2015convex}
R.~T. Rockafellar, \emph{Convex analysis}.\hskip 1em plus 0.5em minus
  0.4em\relax Princeton university press, 2015.

\bibitem{Sacchi2005}
M.~F. Sacchi, ``{Optimal discrimination of quantum operations},''
  \emph{Physical Review A}, vol.~71, no.~6, p. 062340, 2005.

\end{thebibliography}

% that's all folks
\end{document}